\def\cH{{\mathcal H}}
\def\cN{{\mathcal N}}
\newcommand{\be}{\begin{equation}}
\newcommand{\ee}{\end{equation}}
\newcommand{\ba}{\begin{align}}
\newcommand{\ea}{\end{align}}
\newcommand{\bea}{\begin{eqnarray}}
\newcommand{\eea}{\end{eqnarray}}
\newcommand{\bt}{\begin{theorem}}
\newcommand{\et}{\end{theorem}}
\newcommand{\bp}{\begin{proposition}}
\newcommand{\ep}{\end{proposition}}
\newcommand{\bd}{\begin{definition}}
\newcommand{\ed}{\end{definition}}
\newtheorem{theorem}{Theorem}
\newtheorem{proposition}{Proposition}
\newtheorem{definition}{Definition}
\newtheorem{fact}{Fact}
\newtheorem{observation}{Observation}
\begin{document}

\title{Discrimination of dephasing channels}

\author[1]{Milajiguli Rexiti}

\affil[1]{\it\small School of Science and Technology, University of Camerino, I-62032 Camerino, Italy}

\author[2]{Laleh Memarzadeh}

\affil[2]{\it\small Department of Physics, Sharif University of Technology, Tehran, Iran}

\author[1,3]{Stefano Mancini}

\affil[3]{\it\small INFN Sezione di Perugia, I-06123 Perugia, Italy}

\date{\today}  

\maketitle

\begin{abstract}
The problem of dephasing channel discrimination is addressed for finite-dimensional systems.
In particular, the optimization with respect to input states without energy constraint is solved analytically for qubit, qutrit and ququart. Additionally, it is shown that resorting to side entanglement assisted strategy is completely useless in this case.
\end{abstract}


\section{Introduction}

Since any physical process can be described as a quantum channel (linear, completely positive and trace preserving map on the set of trace class operators), the issue of quantum channels discrimination started to become pervasive in various fields{ \cite{GLN,MFS,WY,Hay,Pirandola}}. 
It involves a double optimization: on the measurements to be performed at the output and on the input states. Such a task becomes challenging especially when dealing with infinite dimensional systems, namely with continuous variable channels.

Sometimes, finite channels are investigated to get an approximate behaviour of continuous channels. A notable example is the amplitude-damping channel employed as approximation of a lossy channel in the discrimination problem {\cite{MilaJPA,ALS,Fanizza,QZH}}.
The usage of finite channels to get insights about continuous channels would be even more useful when the latter are non-Gaussian. One of the first examples of non-Gaussian continuous variable channel studied in quantum information theory is the dephasing channel \cite{AMM}. It describes the wash out of coherence properties (off diagonal terms) of a state with respect to the Fock basis. 

Here we address the problem of dephasing channels discrimination for finite dimensional systems. As figure of merit the trace distance between the output states corresponding to two possible channels will be used. In other words, Helstrom measurement strategy will be adopted \cite{Hel}. Then, the optimization with respect to input probe will be solved for qubit, qutrit and ququart, even in presence of energy constraint. Additionally, we show that resorting to entanglement-assisted strategy is completely useless in this case.


\section{The model}

Consider a complex Hilbert space $\cH$ of dimension $d$.
Let ${\cal B}=\{ |0\rangle,|1\rangle, \ldots, |d-1\rangle\}$ denote the Fock basis,
namely the basis composed by eigenvectors of the energy operator (Hamiltonian) $H$:
\begin{equation}
\label{eq:H}
H|n\rangle=n|n\rangle, \quad n=0,\ldots, d-1.
\end{equation}
The dephasing channel we are going to consider is the finite dimensional version 
of the bosonic channel discussed in \cite{AMM}. { A more general version of dephasing channel can be found in\cite{Amosov1,Amosov2}}.
Here, it is a completely positive and trace preserving map defined on the set of density operators over $\cH$ as
\be
\rho\rightarrow \cN_\gamma \left(\rho\right)=\sum_{j=0} ^{\infty}K_j\rho K_j ^{\dag},
\ee
with Kraus operators
\be
K_j=e^{-\frac{1}{2}\gamma H^2}\frac{\left(-i \sqrt{\gamma} H \right)^j}{\sqrt{j !}}.
\ee
Here $\gamma\in[0,+\infty)$ is the dephasing parameter.

After writing the input state in the ${\cal B}$ basis as
\be\label{rhoFock}
\rho=\sum_{m,n=0}^{d-1} \rho_{m,n}\ket m \bra n,
\ee
the channel action reads
\be\label{cn}
\cN_\gamma \left(\rho\right)=\sum_{m,n=0}^{d-1} e^{-\frac{1}{2}\gamma \left(m-n\right)^2} \rho_{m,n}\ket m \bra n.
\ee
Now, assume to have two dephasing channels characterized by parameters $\gamma_0$ and $\gamma_1$,
each appearing with probability $\frac{1}{2}$, and we want to discriminate between them. 

According to \cite{Hel}, given two output states $\cN_{\gamma_0} \left(\rho\right)$ and  
$\cN_{\gamma_1} \left(\rho\right)$ with equal probability, the optimal probability of success in discriminating them reads
\be\label{fm}
P_{s}{\left(\rho\right)}=\frac{1}{2}\left(1+\frac{1}{2}\left\| \cN_{\gamma_0} \left(\rho\right)- \cN_{\gamma_1} \left(\rho\right) \right\|_1\right),
\ee
where
\be
\| T \|_1\equiv {\rm Tr}\sqrt{T^{\dag}T}.
\ee
So we are left with the problem of finding 
\be\label{Psf}
\overline{P_s}:=\max_\rho P_s(\rho).
\ee
Below we shall also consider this optimization problem constrained by a fixed 
amount of input average energy
\be\label{Econstr}
E={\rm Tr}\left(H \rho\right). 
\ee
Clearly, it must be $0\leq E\leq d-1$.


\section{Preliminaries}\label{sec:pre}

Before looking for the optimal input states in specific low dimensional Hilbert spaces, 
we would like to make some general observations.

\begin{observation}
To optimize the probability of success over the set of input states to find 
$\overline{P_s}$ in \eqref{Psf} , 
it is enough to consider pure states only. 
\end{observation}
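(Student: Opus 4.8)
The claim is that maximizing $P_s(\rho)$ over all density operators is achieved on pure states. Since $P_s(\rho)$ depends on $\rho$ only through the affine map $\rho \mapsto \mathcal{N}_{\gamma_0}(\rho) - \mathcal{N}_{\gamma_1}(\rho)$ composed with the trace norm, I need to understand the convexity structure of this composition. My plan is to show that $P_s$ is a convex function of $\rho$, so that its maximum over the (convex, compact) set of density operators is attained at an extreme point — and the extreme points of the state space are exactly the pure states.

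Let me spell out the two ingredients. First, write $T(\rho) := \mathcal{N}_{\gamma_0}(\rho) - \mathcal{N}_{\gamma_1}(\rho)$. This is a linear (hence affine) map in $\rho$: from \eqref{cn}, its matrix entries are $\left(e^{-\frac{1}{2}\gamma_0(m-n)^2} - e^{-\frac{1}{2}\gamma_1(m-n)^2}\right)\rho_{m,n}$, each a linear function of $\rho_{m,n}$. Second, the trace norm $\|\cdot\|_1$ is a norm, hence convex, and it is homogeneous. Therefore $\rho \mapsto \|T(\rho)\|_1$ is convex as the composition of a convex function with an affine map, and so

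\be
P_s(\rho) = \tfrac{1}{2}\left(1 + \tfrac{1}{2}\|T(\rho)\|_1\right)
\ee

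is a convex function of $\rho$ (an affine function of something convex).

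The remaining step is the standard maximum principle: a convex function on a compact convex set attains its maximum at an extreme point of that set. I would invoke Bauer's maximum principle (or simply argue directly: any $\rho$ is a convex combination $\rho = \sum_k p_k |\psi_k\rangle\langle\psi_k|$ of pure states via its spectral decomposition, and convexity gives $P_s(\rho) \le \max_k P_s(|\psi_k\rangle\langle\psi_k|)$). The set of density operators on $\cH$ is convex and compact, and its extreme points are precisely the rank-one projectors $|\psi\rangle\langle\psi|$, i.e. the pure states. Hence $\overline{P_s} = \max_\rho P_s(\rho)$ is achieved on a pure state.

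I do not anticipate a genuine obstacle here; the only point requiring mild care is confirming that the affine-plus-convex-norm argument goes through cleanly, which it does because $\|\cdot\|_1$ is a bona fide norm and $T$ is linear. The one caveat worth flagging is that under the energy constraint \eqref{Econstr}, the feasible set $\{\rho : \mathrm{Tr}(H\rho) = E\}$ is still convex and compact, but its extreme points need no longer be pure states; so this observation as stated applies to the unconstrained problem \eqref{Psf}, and the constrained case would require a separate treatment of the extreme points of the energy-constrained state space.
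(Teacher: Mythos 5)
Your proposal is correct and is essentially the paper's own argument: the paper likewise uses the spectral decomposition $\rho=\sum_i p_i\varphi_i$, linearity of the channels, and the convexity (triangle inequality) of the trace norm to bound $\|\cN_{\gamma_0}(\rho)-\cN_{\gamma_1}(\rho)\|_1$ by its maximum over pure states --- exactly the direct argument you sketch in parentheses, with Bauer's maximum principle being just its abstract repackaging. Your closing caveat about the energy-constrained feasible set is a sensible extra observation but not needed for the statement as given.
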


In fact, 
given the spectral decomposition $\rho=\sum_i p_i \varphi_i$, with $\varphi_i= \ket\varphi_i \bra\varphi_i$ ($ {\rm rank}{\varphi_i}=1$), 
thanks to the linearity of channel maps and
the convexity of trace norm,
we have:
\begin{align}\label{eq:convex}
\left\| \cN_{\gamma_0} \left(\sum_i p_i\varphi_i\right)- \cN_{\gamma_1} \left(\sum_i p_i\varphi_i\right) \right\|_1
&=\left\| \sum_i p_i\cN_{\gamma_0} \left(\varphi_i\right)- \sum_i p_i\cN_{\gamma_1} \left(\varphi_i\right) \right\|_1 \notag\\
&\leq\sum_i p_i\left\| \cN_{\gamma_0} \left(\varphi_i\right)- \cN_{\gamma_1} \left(\varphi_i\right) \right\|_1 \notag\\
&\leq \left(\sum_i p_i\right)
\max_{\varphi_i}
\left\| \cN_{\gamma_0} \left(\varphi_i\right)- \cN_{\gamma_1} \left(\varphi_i\right) \right\|_1\notag\\
&=\left\| \cN_{\gamma_0} \left(\varphi^*\right)- \cN_{\gamma_1} \left(\varphi^*\right) \right\|_1,
\end{align}
where we denoted $\varphi^* \equiv  {\rm argmax}_{\varphi_i}
\left\| \cN_{\gamma_0} \left(\varphi_i\right)- \cN_{\gamma_1} \left(\varphi_i\right) \right\|_1$.
As consequence of \eqref{eq:convex} we obtain
\begin{align}
\max_\rho \left\| \cN_{\gamma_0} \left(\rho\right)- \cN_{\gamma_1} \left(\rho\right) \right\|_1\leq \max_{\varphi \,:\, {\rm rank}{\varphi}=1} \left\| \cN_{\gamma_0} \left(\varphi\right)- \cN_{\gamma_1} \left(\varphi\right) \right\|_1.
\end{align}

While purity of input states suffices for generic channel discrimination, below we show that for discriminating dephasing channels, further restrictions on the set of input states can be made.

\begin{fact}\label{property1} An arbitrary input pure state $\ket{\varphi}$ 
written in the basis ${\cal B}$ as
\begin{equation}\label{phi}
\ket{\varphi}=\sum_{j=0}^{d-1}\sqrt{r_j} e^{i\theta_j} |j\rangle,
\end{equation}
with $r_j\in\mathbb{R}^+_0$ and $\theta_j\in[0,2\pi)$, 
performs the same for discriminating dephasing channels as
the input state $U_\varphi\ket{\varphi}$, where
\begin{equation}\label{Uphi}
U_{{\varphi}}\equiv\sum_{k=0}^{d-1} e^{-i\theta_k} \ket{k}\bra{k}.
\end{equation}
\end{fact}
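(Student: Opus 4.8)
The plan is to recognize $U_\varphi$ as a unitary that is diagonal in the Fock basis ${\cal B}$, and to exploit the fact that such a unitary commutes with the dephasing map under conjugation. Observe first that $U_\varphi\ket{\varphi}=\sum_{j=0}^{d-1}\sqrt{r_j}\,\ket{j}$, so the content of the Fact is that the phases $\theta_j$ play no role in the discrimination: one may always take the probe to have non-negative real amplitudes. Accordingly, I would phrase the claim as the equality of the figure of merit \eqref{fm} evaluated on $\ket{\varphi}$ and on $U_\varphi\ket{\varphi}$.

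The key step is a commutation lemma: for \emph{any} diagonal unitary $U=\sum_k e^{i\alpha_k}\ket{k}\bra{k}$ and any state $\rho$, one has $\cN_\gamma(U\rho U^{\dagger})=U\,\cN_\gamma(\rho)\,U^{\dagger}$. This follows immediately from the matrix-element form \eqref{cn}: conjugation by $U$ multiplies the entry $\rho_{m,n}$ by $e^{i(\alpha_m-\alpha_n)}$, whereas $\cN_\gamma$ multiplies it by $e^{-\frac12\gamma(m-n)^2}$, and these two scalar factors commute. I would verify this by writing out both sides explicitly in the basis ${\cal B}$.

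Then, since the density operator associated with $U_\varphi\ket{\varphi}$ is exactly $U_\varphi\,\varphi\,U_\varphi^{\dagger}$ (with $\varphi=\ket{\varphi}\bra{\varphi}$), applying the lemma to both channels and factoring out $U_\varphi$ yields
\be
\cN_{\gamma_0}\!\left(U_\varphi\,\varphi\,U_\varphi^{\dagger}\right)-\cN_{\gamma_1}\!\left(U_\varphi\,\varphi\,U_\varphi^{\dagger}\right)
= U_\varphi\left[\cN_{\gamma_0}(\varphi)-\cN_{\gamma_1}(\varphi)\right]U_\varphi^{\dagger}.
\ee
Finally, invoking the invariance of the trace norm under unitary conjugation, $\|U A U^{\dagger}\|_1=\|A\|_1$, the quantity appearing in \eqref{fm} is identical for the two probes, which proves the Fact.

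As for difficulty, there is no serious obstacle here. The only point demanding care is the commutation lemma: one must confirm that the Kraus representation and the matrix-element representation \eqref{cn} genuinely agree, and that $U_\varphi$ is unitary (guaranteed because each $\theta_k$ is real, so $|e^{-i\theta_k}|=1$). Everything else is a one-line consequence of the unitary invariance of $\|\cdot\|_1$.
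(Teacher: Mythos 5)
Your proposal is correct and follows essentially the same route as the paper: the paper's key identity ${\cal N}_{\gamma}\left(U_{\varphi}\,\varphi\, U_{\varphi}^\dag\right)=U_{\varphi}{\cal N}_{\gamma}(\varphi)U_{\varphi}^\dag$ is exactly your commutation lemma (proved the same way, from the matrix-element form \eqref{cn}), followed by the same appeal to unitary invariance of $\|\cdot\|_1$ and to \eqref{fm}. Your statement of the lemma for an arbitrary diagonal unitary is a slight (harmless) generalization, but adds nothing the paper's argument does not already contain.
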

\begin{proof}
By using the representation of dephasing channel as in Eq.~(\ref{cn}), it can be easily seen that
\begin{equation}
\label{eq:semiCov}
{\cal N}_{\gamma}\left(U_{\varphi} {\varphi} U_{{\varphi}}^\dag \right)
=U_{{\varphi}}{\cal N}_{\gamma}({\varphi} )U_{{\varphi}}^\dag,
\end{equation}
{for all possible choices of phases in Eq.(\ref{phi}) and (\ref{Uphi}), given that the basis ${\cal B}$ is fixed.}
Then, taking into account the norm invariance under unitary conjugation, we have
\begin{equation}
\label{eq:equiNorm}
 \left\| {\cal N}_{\gamma_0}\left(U_{\varphi} {\varphi} U_{{\varphi}}^\dag \right)
 - {\cal N}_{\gamma_1}\left(U_{\varphi} {\varphi} U_{{\varphi}}^\dag \right)\right\|_1
 = \left\| \cN_{\gamma_0} ({\varphi})- \cN_{\gamma_1}({\varphi}) \right\|_1.
\end{equation}
From Eq.~(\ref{fm}) and Eq.~(\ref{eq:equiNorm}) we conclude that the states $\ket{\varphi}$ and $U_\varphi\ket{\varphi}$ perform the
same for channel discrimination. 
\end{proof}

\begin{observation}
As a consequence of Fact \ref{property1}, we can restrict the search for optimal input states, 
when expanded in the basis $\mathcal{B}$,
over pure states with non-negative real coefficients subjected to normalization constraint:
\begin{equation}
\label{eq:NC}
\ket{\varphi}=\sum_{n=0}^{d-1}\sqrt{r_n}  \ket{n},\qquad r_n\geq 0, \quad \text{s.t.} \quad \sum_{n=0}^{d-1}r_n=1.
\end{equation}
\end{observation}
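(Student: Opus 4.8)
The plan is to combine the first Observation with Fact~\ref{property1} directly, since the Observation is essentially a corollary of the latter. By the first Observation the optimization in \eqref{Psf} may be restricted to pure inputs, so I would start from a generic pure state written in the Fock basis as in \eqref{phi}, namely $\ket{\varphi}=\sum_{j=0}^{d-1}\sqrt{r_j}\,e^{i\theta_j}\ket{j}$ with $r_j\geq 0$ and $\theta_j\in[0,2\pi)$, carrying the normalization $\sum_j r_j=1$ inherited from $\langle\varphi|\varphi\rangle=1$.

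The key step is to evaluate the action of the diagonal unitary $U_\varphi$ of \eqref{Uphi} on this state. Since $U_\varphi$ is diagonal in ${\cal B}$ with entries $e^{-i\theta_k}$, I would compute
\begin{equation}
U_\varphi\ket{\varphi}=\sum_{k=0}^{d-1}e^{-i\theta_k}\ket{k}\bra{k}\sum_{j=0}^{d-1}\sqrt{r_j}\,e^{i\theta_j}\ket{j}=\sum_{j=0}^{d-1}\sqrt{r_j}\,\ket{j},
\end{equation}
so that every phase cancels and the resulting state has exactly the non-negative real coefficients $\sqrt{r_j}$ demanded by \eqref{eq:NC}. Unitarity of $U_\varphi$ guarantees that normalization is preserved, i.e.\ $\sum_j r_j=1$ still holds.

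Finally, Fact~\ref{property1} asserts that $\ket{\varphi}$ and $U_\varphi\ket{\varphi}$ yield the same value of the discrimination figure of merit \eqref{fm}. Hence to any pure input there corresponds a phase-free counterpart of the form \eqref{eq:NC} achieving the same success probability, so the search for the optimal probe may be restricted to such states without loss of generality. There is essentially no difficult step here: the only thing to notice is that $U_\varphi$ has been tailored precisely to strip the phases $e^{i\theta_j}$ from the amplitudes, and the whole content of the Observation reduces to the one-line computation above together with the invariance already established in Fact~\ref{property1}.
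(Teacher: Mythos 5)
Your proposal is correct and matches the paper's (implicit) reasoning exactly: combine the purity reduction from the first Observation with Fact~\ref{property1}, and note that $U_\varphi\ket{\varphi}=\sum_j\sqrt{r_j}\ket{j}$ strips all phases while preserving the $r_j$ and hence the normalization. The paper leaves this as an unproved corollary, and your one-line computation is precisely the justification it intends.
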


Yet another simplification is possible in the optimization in Eq.~\eqref{Psf}. 
In fact, having $\ket\varphi$ as in Eq.~\eqref{eq:NC}
we can see the action of the channel 
\be\label{Ngphi}
{\cal N}_\gamma(\varphi)=\sum_{n,m=0}^{d-1}\sqrt{r_nr_m}e^{-(n-m)^2\gamma/2}\ket{n}\bra{m},
\ee
is the same on each $k$-diagonal of the density matrix $\varphi$.\footnote{Notice that $k=m-n$, but it is enough to consider $k\geq 0$, because the matrix \eqref{Ngphi} is symmetric.
} Thus, to optimize the input, there is no reason to take different 
coefficients within the same $k$-diagonal.

Imposing that the elements of each $k$-diagonal of $\varphi$ are equal amounts to have
\begin{align}\label{eq:sys}
\left\{
\begin{array}{l}
r_0r_1=r_1r_2=r_2r_3=\ldots \qquad (k=1) \\
r_0r_2=r_1r_3=r_2r_4=\ldots \qquad (k=2) \\
r_0r_3=r_1r_4=r_2r_5=\ldots \qquad (k=3) \\
\vdots
\end{array}
\right.
\end{align}
This system of equations \eqref{eq:sys} admits a unique solution
\be\label{eq:coefsym}
r_0=r_{d-1}, \quad r_1=r_{d-2}, \quad r_2=r_{d-3}, \quad \ldots
\ee

\begin{observation}\label{ob3}
As consequence of \eqref{eq:coefsym}, we can restrict the optimization \eqref{Psf} to input states of the form 
\be
\ket{\varphi}=\sum_{j=0}^{\lfloor \frac{d-1}{2}\rfloor} \sqrt{r_j} \ket{j}
+\sum_{j=\lfloor \frac{d-1}{2}\rfloor+1}^{d-1} \sqrt{r_{d-1-j}} \ket{j},
\ee
characterized by $\lceil \frac{d}{2}\rceil$ parameters subject to the normalization condition

\be
\sum_{j=0}^{\lfloor \frac{d-1}{2}\rfloor} {r_j} 
+\sum_{j=\lfloor \frac{d-1}{2}\rfloor+1}^{d-1} {r_{d-1-j}}=1.
\ee

\end{observation}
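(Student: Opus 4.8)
The plan is to read Observation~\ref{ob3} as the explicit bookkeeping consequence of the palindromic relation \eqref{eq:coefsym}, after first putting that relation on rigorous footing. The only genuinely non-trivial ingredient is the claim that the optimal probe may be taken with $r_j=r_{d-1-j}$; once this is granted, the remainder is substitution, reindexing, and a parameter count.

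To justify the palindromic reduction cleanly I would mirror the argument of Fact~\ref{property1}, replacing the diagonal phase unitary $U_\varphi$ by the order-reversal operator
\be
R=\sum_{n=0}^{d-1}\ket{d-1-n}\bra{n},\qquad R=R^\dag=R^{-1}.
\ee
The key point is that the dephasing factor $e^{-(n-m)^2\gamma/2}$ in \eqref{cn} depends only on $|n-m|$, and $|n-m|$ is invariant under $n\mapsto d-1-n$. Hence $\cN_\gamma$ is covariant under $R$, i.e. $\cN_\gamma(R\rho R)=R\,\cN_\gamma(\rho)\,R$, exactly as in \eqref{eq:semiCov}. Invariance of the trace norm under unitary conjugation then gives, as in \eqref{eq:equiNorm},
\be
\left\|\cN_{\gamma_0}(R\varphi R)-\cN_{\gamma_1}(R\varphi R)\right\|_1=\left\|\cN_{\gamma_0}(\varphi)-\cN_{\gamma_1}(\varphi)\right\|_1 .
\ee
Since for $\ket\varphi$ of the form \eqref{eq:NC} one has $R\ket\varphi=\sum_n\sqrt{r_{d-1-n}}\,\ket n$, this exhibits the objective $P_s$ as invariant under the coefficient reversal $r_n\mapsto r_{d-1-n}$; combined with the fact that the channel acts identically on every element of a fixed $k$-diagonal, this is what underlies the equal-diagonal system \eqref{eq:sys} and its palindromic solution \eqref{eq:coefsym}.

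Granting \eqref{eq:coefsym}, I would finish by substituting $r_j=r_{d-1-j}$ into the real, non-negative expansion \eqref{eq:NC} and splitting the sum at the index $\lfloor\frac{d-1}{2}\rfloor$. On the upper block $j\ge\lfloor\frac{d-1}{2}\rfloor+1$ the amplitude $\sqrt{r_j}$ is rewritten as $\sqrt{r_{d-1-j}}$, which sends the label $d-1-j$ back into the lower block and reproduces exactly the displayed form. The free parameters are then the lower-block amplitudes $r_0,\dots,r_{\lfloor(d-1)/2\rfloor}$, namely $\lfloor\frac{d-1}{2}\rfloor+1$ of them; checking the parities $d=2p$ and $d=2p+1$ separately confirms $\lfloor\frac{d-1}{2}\rfloor+1=\lceil\frac d2\rceil$, with the self-paired middle index $j=\frac{d-1}{2}$ (odd $d$) absorbed into the first block without double counting. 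The normalization $\sum_n r_n=1$ folds onto the stated constraint under the same upper-block relabeling.

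The main obstacle is conceptual rather than computational: making the phrase ``there is no reason to take different coefficients within the same $k$-diagonal'' fully rigorous. The reflection covariance above delivers the reversal symmetry $r_n\leftrightarrow r_{d-1-n}$ of the optimization landscape, but promoting a symmetry of the objective to a genuinely symmetric \emph{optimizer} needs a further step (e.g. uniqueness of the maximizer, or concavity of $P_s$ along the reversal direction). I would flag this as the delicate point; the subsequent index splitting and even/odd parameter count are routine.
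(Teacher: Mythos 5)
Your bookkeeping half is fine: substituting $r_j=r_{d-1-j}$ into \eqref{eq:NC}, splitting the sum at $\lfloor\frac{d-1}{2}\rfloor$, and the parity check $\lfloor\frac{d-1}{2}\rfloor+1=\lceil\frac{d}{2}\rceil$ all reproduce the statement exactly. Where you genuinely depart from the paper is in how the palindromic relation \eqref{eq:coefsym} is justified. The paper does \emph{not} invoke a reflection operator here: it observes from \eqref{cn} that the channel multiplies every element of a fixed $k$-diagonal by the same factor $e^{-k^2\gamma/2}$, declares that ``there is no reason to take different coefficients within the same $k$-diagonal,'' imposes equality of the elements of each diagonal --- the system \eqref{eq:sys} --- and reads off \eqref{eq:coefsym} as its solution. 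Your route instead rests on covariance under the order-reversal unitary; note that your $R$ is precisely the operator $V$ of Fact \ref{property2}, which the paper states immediately \emph{after} Observation \ref{ob3} and uses for a different purpose (halving the energy range in the constrained problem). So you have in effect re-derived Fact \ref{property2} and tried to build the Observation on top of it.

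The gap you flag is real, and you locate it correctly: invariance of the objective under $r_n\mapsto r_{d-1-n}$ only shows that the set of maximizers is reflection-invariant, not that it contains a palindromic point; and the natural averaging argument goes the wrong way, because by the convexity estimate \eqref{eq:convex} the trace-norm objective can only \emph{decrease} under mixing of states. You should know, however, that the paper does not close this gap either. Its ``no reason to take different coefficients'' step is just as heuristic as your symmetrization hope, and the system \eqref{eq:sys}, read literally, does not even have \eqref{eq:coefsym} as its unique solution once $d\geq 4$: with all $r_i>0$ the $k=1$ chain forces $r_0=r_2=\dots$ and $r_1=r_3=\dots$, and the $k=2$ chain then forces $r_0=r_1$, i.e.\ \emph{uniform} coefficients --- strictly stronger than palindromic and contradicted by the non-uniform optima the paper itself finds in the ququart section --- while allowing zeros admits non-palindromic solutions such as $(r_0,0,0,r_3)$ with $r_0\neq r_3$. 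So your proposal is a different route to the same incompletely justified reduction, with the honest advantage of naming the missing step; a fully rigorous proof of Observation \ref{ob3} would require either a symmetrization inequality tailored to this structure (showing $\|\Delta\|_1$ does not decrease when $r_n$ is replaced by a suitable mean of $r_n$ and $r_{d-1-n}$ after renormalization) or an a posteriori verification that the optima found within the palindromic family are global, and neither you nor the paper supplies one.
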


Unfortunately, when the energy constraint \eqref{Econstr} is employed, the symmetry \eqref{eq:coefsym} can no longer be exploited. We do not have the freedom to impose the same coefficients along a $k$-diagonal of the density matrix due to the energy constraint. 

\begin{fact}\label{property2}
The states $\ket \varphi$ and $V\ket\varphi$ with
\be
V:=\sum_{i=0}^{d-1}\ket i \bra{d-1-i} ,
\ee 
perform the same for discriminating dephasing channels. 
\end{fact}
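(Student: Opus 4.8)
The plan is to mirror the argument of Fact \ref{property1}, replacing the state-dependent phase unitary $U_\varphi$ by the fixed reflection operator $V$ and establishing a genuine covariance of the dephasing channel under $V$. First I would record the elementary properties of $V$. Since $V\ket{n}=\ket{d-1-n}$, a direct check gives $V^\dag=V$ and $V^2=I$, so $V$ is a self-adjoint unitary that reverses the Fock ladder $n\mapsto d-1-n$; in particular the density operator associated with $V\ket\varphi$ is $V\varphi V^\dag$.

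The key step is the covariance identity
\be
\cN_\gamma\!\left(V\varphi V^\dag\right)=V\,\cN_\gamma\!\left(\varphi\right)\,V^\dag,
\ee
which, unlike the semi-covariance \eqref{eq:semiCov}, I expect to hold for every input $\varphi$ and every $\gamma$. To prove it I would start from the matrix-element form \eqref{cn}, use $V^\dag=V$ to write $\big(V\varphi V^\dag\big)_{m,n}=\varphi_{d-1-m,\,d-1-n}$, and then reindex the resulting double sum by $p=d-1-m$, $q=d-1-n$. Under this substitution the dephasing weight is unchanged, since $e^{-\frac12\gamma(m-n)^2}$ depends only on $m-n$ and $(m-n)^2=(p-q)^2$, while $\ket{m}\bra{n}=V\ket{p}\bra{q}V$; the sum therefore collapses back to $V\cN_\gamma(\varphi)V^\dag$. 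The conceptual content is simply that the reflection $(m,n)\mapsto(d-1-m,d-1-n)$ negates the index difference $m-n$, to which the Gaussian weight is insensitive — the same reflection symmetry responsible for the coefficient pattern $r_0=r_{d-1}$, $r_1=r_{d-2},\dots$ of \eqref{eq:coefsym}.

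Granting this covariance identity, the conclusion is immediate: by invariance of the trace norm under unitary conjugation,
\be
\left\|\cN_{\gamma_0}\!\left(V\varphi V^\dag\right)-\cN_{\gamma_1}\!\left(V\varphi V^\dag\right)\right\|_1
=\left\|V\!\left(\cN_{\gamma_0}(\varphi)-\cN_{\gamma_1}(\varphi)\right)\!V^\dag\right\|_1
=\left\|\cN_{\gamma_0}(\varphi)-\cN_{\gamma_1}(\varphi)\right\|_1,
\ee
so by \eqref{fm} the success probabilities for $\ket\varphi$ and $V\ket\varphi$ coincide. I do not anticipate a genuine obstacle: the only delicate point is the index bookkeeping in the reflection $p=d-1-m$, $q=d-1-n$ and the verification that the Gaussian factor is invariant under it, after which the statement is forced.
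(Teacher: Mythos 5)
Your proposal is correct and follows essentially the same route as the paper: the paper's proof also rests on the covariance identity $\Delta(\gamma_0,\gamma_1;V\ket\varphi)=V\Delta(\gamma_0,\gamma_1;\ket\varphi)V^\dag$ (which is just your per-channel identity $\cN_\gamma(V\varphi V^\dag)=V\cN_\gamma(\varphi)V^\dag$ applied to $\gamma_0$ and $\gamma_1$ and subtracted) followed by unitary invariance of the trace norm. The only difference is that the paper asserts the covariance with ``one can easily show,'' whereas you supply the explicit reindexing $p=d-1-m$, $q=d-1-n$ that justifies it.
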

\begin{proof}
Setting $\varphi=\ket{\varphi}\bra{\varphi}$ and
\be
\label{eq:Delta}
\Delta\left(\gamma_0,\gamma_1; \ket\varphi \right)\equiv\cN_{\gamma_0} \left(\varphi\right)- \cN_{\gamma_1} \left(\varphi\right), 
\ee
one can easily show that
\be
\Delta\left(\gamma_0,\gamma_1;V \ket\varphi \right)=V\Delta\left(\gamma_0,\gamma_1; \ket\varphi \right)V^\dag,
\ee
hence, 
\be
\left\|\Delta\left(\gamma_0,\gamma_1; \ket\varphi \right)\right\|_1=\left\|\Delta\left(\gamma_0,\gamma_1;V \ket\varphi \right)\right\|_1
\ee
Therefore, the states $\ket \varphi$ and $V\ket\varphi$ perform the same for discriminating channels $\cN_{\gamma_0} $ and $\cN_{\gamma_1} $.
\end{proof}
 
 \begin{observation}
As a consequence of Fact \ref{property2}, we can conclude that if a state $\ket{\varphi}=
\sum_{j=0}^{d-1} r_j \ket{j}$ is optimal with energy constraint $E$, then the state $V\ket{\varphi}$
will be optimal with energy constraint $d-1-E$. This means that we can restrict the analysis to half of the energy range,  i.e. $0\leq E\leq \frac{d-1}{2} $.
\end{observation}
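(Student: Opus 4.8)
The plan is to exploit that $V$ is a unitary involution implementing the reversal of the Fock basis, and then to transfer optimality between the two energy sectors using the norm equality of Fact~\ref{property2}. First I would record the elementary properties of $V$: since $V\ket{j}=\ket{d-1-j}$ one has $V^2=I$ and $V^\dag=V$, so $V$ is unitary and $\ket\varphi\mapsto V\ket\varphi$ is an involutive bijection of the set of (normalized) pure states onto itself. By the reduction to pure states established above, working with pure states entails no loss of generality.

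Next I would compute how the energy constraint transforms under this map. Writing a normalized state as $\ket\varphi=\sum_{j=0}^{d-1}\sqrt{r_j}\ket j$ with $\sum_j r_j=1$ and energy $E={\rm Tr}\left(H\varphi\right)=\sum_{j} j\,r_j$, the reversed state is $V\ket\varphi=\sum_{j}\sqrt{r_j}\ket{d-1-j}$, whose energy is
\[
\sum_{j=0}^{d-1}(d-1-j)\,r_j=(d-1)\sum_{j} r_j-\sum_{j} j\, r_j=(d-1)-E.
\]
Hence $V$ carries the constraint surface $\{{\rm Tr}(H\varphi)=E\}$ bijectively onto $\{{\rm Tr}(H\varphi)=d-1-E\}$, with $V^2=I$ providing the inverse.

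Combining these two ingredients finishes the argument. Writing $\overline{P_s}(E):=\max\{P_s(\varphi):{\rm Tr}(H\varphi)=E\}$, the substitution $\psi=V\varphi$ together with Fact~\ref{property2}, which yields $P_s(V\varphi)=P_s(\varphi)$ through the invariance of $\|\Delta\|_1$ under conjugation by $V$, gives
\[
\overline{P_s}(d-1-E)=\max_{{\rm Tr}(H\psi)=d-1-E}P_s(\psi)=\max_{{\rm Tr}(H\varphi)=E}P_s(V\varphi)=\max_{{\rm Tr}(H\varphi)=E}P_s(\varphi)=\overline{P_s}(E).
\]
In particular, if $\ket{\varphi}$ is optimal at energy $E$, then $V\ket\varphi$ lies on the surface of energy $d-1-E$ and attains the same (optimal) value, hence is optimal there. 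Since $E\mapsto d-1-E$ maps $[0,\tfrac{d-1}{2}]$ onto $[\tfrac{d-1}{2},d-1]$, solving the problem on the lower half of the energy range determines it on the whole range $[0,d-1]$.

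This is essentially a symmetry/bijection argument, so I do not expect a serious obstacle. The only point needing a little care is the bookkeeping showing that $V$ simultaneously preserves normalization and reflects the energy to $d-1-E$, together with the implicit appeal to the pure-state reduction so that Fact~\ref{property2} applies verbatim.
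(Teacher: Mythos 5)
Your proposal is correct and follows essentially the same route as the paper, which states this Observation as an immediate consequence of Fact \ref{property2}: the invariance $\left\|\Delta(\gamma_0,\gamma_1;V\ket\varphi)\right\|_1=\left\|\Delta(\gamma_0,\gamma_1;\ket\varphi)\right\|_1$ combined with the fact that $V$ maps the energy-$E$ constraint surface bijectively onto the energy-$(d-1-E)$ one. Your write-up merely makes explicit the bookkeeping (unitarity and involutivity of $V$, the computation $\sum_j (d-1-j)r_j=(d-1)-E$) that the paper leaves implicit.
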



\section{Discrimination without energy constraint}
\label{sec:DnoE}
We want to discriminate between two dephasing channels $\cN_{\gamma_0}$ and $\cN_{\gamma_1}$ with dephasing parameters $\gamma_0$, $\gamma_1$ respectively,
 and each appearing with probability $\frac{1}{2}$. From here on, without loss of generality, we assume $\gamma_0<\gamma_1$ due to the symmetry of the trace norm w.r.t. $\gamma_0$ and $\gamma_1$. 

\subsection{The qubit case}\label{sec:qubit}

Following Sec.\ref{sec:pre}, for  $d=2$ we would consider
\begin{equation}
    \label{eq:Psi01}
    \ket{\varphi}=\sqrt{r_0}\ket{0}+\sqrt{r_1}\ket{1},
\end{equation}
where $r_0,r_1\in\mathbb{R}_0^+$, such that
\bea\label{normqubit}
r_0+r_1=1.
\eea
Furthermore, referring to Observation \ref{ob3} we have $r_0=r_1$, which leads us to $r_0=r_1=1/2$.

It is then easy to see that
\be\label{Psfqubit}
\overline{P}_s=\frac{1}{2}\left(1+\frac{1}{2}g_1\right),
\ee
where
\begin{equation}\label{adef}
   g_1\equiv e^{-\gamma_0/2}-e^{-\gamma_1/2}.
\end{equation}

\subsection{The qutrit case}\label{sec:qutrit}

We study here the case $d=3$.
Following Sec.\ref{sec:pre} we would consider
\begin{equation}\label{inputqutrit}
    \ket{\varphi}=\sqrt{r_0}\ket{0}+\sqrt{r_1}\ket{1}+\sqrt{r_2}\ket{2},
\end{equation}
where $r_0,r_1,r_2\in\mathbb{R}_0^+$, such that 
\be
r_0+r_1+r_2=1.
\ee
Additionally, referring to Observation \ref{ob3}, we have $r_0=r_2$.
Thus, the normalization condition becomes
\begin{equation}\label{normqutrit}
2r_0+r_1=1.
\end{equation}
After writing $r_1=1-2r_0$, we get (w.r.t. the basis $\cal B$)
\be \label{Deltaqutrit}
\Delta(\gamma_0,\gamma_1;\ket{\varphi})
=\begin{pmatrix}
0   &  g_1 \sqrt{r_0(1-2r_0)} & g_2 r_0\\
g_1 \sqrt{r_0(1-2r_0)} & 0 &  g_1 \sqrt{r_0(1-2r_0)} \\ 
g_2 r_0  &  g_1 \sqrt{r_0(1-2r_0)}  &  0
\end{pmatrix},
\ee
where, in addition to \eqref{adef}, we have set
\begin{equation}\label{bdef}
g_2\equiv e^{-4\gamma_0/2}-e^{-4\gamma_1/2}.
\end{equation}
Eq.\eqref{Deltaqutrit} leads to
 \be\label{Deltaqu3}
 \| \Delta(\gamma_0,\gamma_1;\ket{\varphi}) \|_1=\sqrt{r_0[g_1^2 \left(8-16 r_0\right)+g_2^2 r_0]}+g_2 r_0,
 \ee
 which is the quantity to be optimized in terms of $r_0$. Using the basic calculus, 
 we arrive at the following results.
 \begin{itemize}
 \item[i)] When $2 g_1>g_2$
 \be
\overline{ P_{s}}=\frac{1}{2}\left(1+\frac{{2}g_1^2}{4g_1-g_2}\right),
 \ee
 with the optimal input state \eqref{inputqutrit} having
 \be\label{t0t2max}
r_0=r_2=\frac{g_1}{4 g_1-g_2}, \quad\quad r_1=\frac{2 g_1-g_2}{4 g_1-g_2}.
 \ee
 
 \item[ii)] When $2g_1\leq g_2$ 
  \be
\overline{ P_{s}}=\frac{1}{2}\left(1+\frac{g_2}{2}\right),
 \ee
  with the optimal input state \eqref{inputqutrit} having
 \be\label{t0t2min}
r_0=r_2=\frac{1}{2}, \quad\quad r_1=0.
 \ee
 \end{itemize}
 
 This shows that the parameters space $\{\gamma_0,\gamma_1\}$,
 besides being symmetric w.r.t. $\gamma_0=\gamma_1$, 
 it is further divided into two parts by a line $2 g_1=g_2$ for which $\gamma_0\neq\gamma_1$ (see Fig.\ref{figDIF}). {Notice that the piecewise function $\overline{P_s} $ is continuous w.r.t. to the variables $\gamma_0$ and $\gamma_1$, however its first derivatives are discontinuous in the line $2g
 _1=g_2$.} 
\begin{figure}[H]
\centering
\includegraphics[width=9cm]{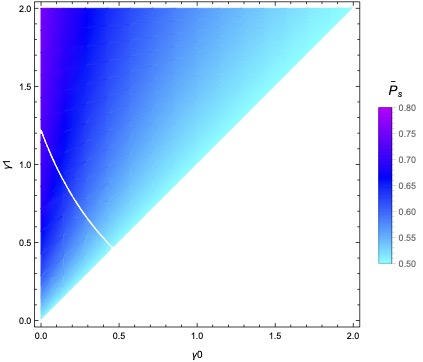}
\caption{{Density plot of $\overline{P}_s$ in the parameters space $\{\gamma_0,\gamma_1\}$ with highlighted the two regions
i) $2 g_1>g_2 $ (bottom left) and ii) $2 g_1\leq g_2$ (top right).} Here and in the plots below, the part $\gamma_0> \gamma_1$ is 
not reported because of the symmetry w.r.t. $\gamma_1=\gamma_0$.}
\label{figDIF}
\end{figure}

\subsection{The ququart case}\label{sec:ququart}

When $d=4$, following Sec.\ref{sec:pre} we would consider
\begin{equation}\label{inputqudit}
    \ket{\varphi}=\sqrt{r_0}\ket{0}+\sqrt{r_1}\ket{1}+\sqrt{r_2}\ket{2}+\sqrt{r_3}\ket{3},
\end{equation}
where $r_0,r_1,r_2,r_3\in\mathbb{R}_0^+$, such that 
\be\label{normququart}
r_0+r_1+r_2+r_3=1.
\ee
Additionally, we can confine the optimization to input
state of the form \eqref{inputqudit} with $r_2 = r_1$ and $r_3 = r_0$ by referring to Observation \ref{ob3}. Thus, the normalization condition becomes
\begin{equation}\label{normqudit}
r_0+r_1=\frac{1}{2}.
\end{equation}
After writing {${r_1}={\frac{1-2 r_0}{2}}$} we get (w.r.t. the basis $\cal B$) 
\begin{align}\label{Deltaququart}
&\Delta(\gamma_0,\gamma_1;\ket{\varphi})\notag\\
&=\begin{pmatrix}
0   &  g_1 \sqrt{r_0\left(\frac{1}{2}-r_0\right)} & g_2  \sqrt{r_0\left(\frac{1}{2}-r_0\right)} 
& g_3\left(\frac{1}{2}-r_0\right) \\
g_1 \sqrt{r_0\left(\frac{1}{2}-r_0\right)}   & 0 &  g_1 r_0& g_2 \sqrt{r_0\left(\frac{1}{2}-r_0\right)}  
\\ 
g_2 \sqrt{r_0\left(\frac{1}{2}-r_0\right)} & g_1 r_0 & 0 & g_1 \sqrt{r_0\left(\frac{1}{2}-r_0\right)} 
\\
g_3\left(\frac{1}{2}-r_0\right) & g_2 \sqrt{r_0\left(\frac{1}{2}-r_0\right)} &  g_1 \sqrt{r_0\left(\frac{1}{2}-r_0\right)}   &  0
\end{pmatrix},
\end{align}
where, in addition to \eqref{adef} and \eqref{bdef}, we have set
\begin{equation}
g_3\equiv e^{-9\gamma_0/2}-e^{-9\gamma_1/2}.
\end{equation}
Eq.\eqref{Deltaququart} leads to
 \be\label{1normqu4}
 \| \Delta(\gamma_0,\gamma_1;\ket{\varphi}) \|_1=\frac{1}{4} \left(|\zeta+\sqrt{\xi_+}|+|\zeta-\sqrt{\xi_+}| +|\zeta+\sqrt{\xi_-}|+|\zeta-\sqrt{\xi_-}|\right),
 \ee
 where
\begin{align}
\label{eq:zetas}
&\zeta \equiv g_3+2(g_1-g_3)r_0,\\
&\xi_\pm \equiv g_3^2+4  \left(2(g_1\pm g_2)^2-g_3 (g_1+g_3)\right)r_0+4((g_1+g_3)^2-4(g_1\pm g_2)^2 )r_0^2.
\end{align}

Notice that $\xi_\pm$ are parabolas in terms of $r_0 \in \left[0,\frac{1}{2}\right]$ and $\zeta, \xi_-,\xi_+$ are always non-negative in the region $r_0 \in \left[0,\frac{1}{2}\right]$.
Furthermore, we have
\bea
&\sqrt{\xi_+}>\zeta>\sqrt{\xi_-}, \quad (g_1-g_2)^2 {<} g_1 g_3,\\
&\sqrt{\xi_+}>\sqrt{\xi_-}>\zeta, \quad (g_1-g_2)^2{>}g_1 g_3.
\eea
Taking into account these properties we arrive at
 \be
\| \Delta(\gamma_0,\gamma_1;\ket{\varphi}) \|_1=\left\{\begin{array}{lll}
   \frac{\zeta+\sqrt{\xi_+}}{2},       & &   (g_1-g_2)^2{\leq} g_1 g_3\\
\frac{\sqrt{\xi_-}+\sqrt{\xi_+}}{2},     & & (g_1-g_2)^2{>}  g_1 g_3
\end{array}.\right.
 \ee
 Then, the maximum probability of success and the optimal input states result as follows.

 \begin{itemize}
 \item[i)] When $ (g_1-g_2)^2\leq g_1 g_3 $, we have:
 
 \begin{enumerate}
 {
\item[i.1)] If $g_1+g_2\leq g_3$, it is
\be
\overline P_s=\frac{1}{2}\left(1+\frac{g_3}{2}  \right),
\ee
with the optimal input state having
\be
r_0=r_3=\frac{1}{2},\quad\quad r_1=r_2=0.
\ee
  \item[i.2)]
   If $g_3<g_1+g_2$, it is
\be
\overline P_s=\frac{1}{2}\left(1+\frac{(g_1+g_2)^2-g_1 g_3}{2(g_1+2 g_2-g_3)}  \right),
\ee
with the optimal input state having
\be
r_0=r_3=\frac{g_2}{2(g_1+2 g_2-g_3)},\quad \quad r_1=r_2=\frac{g_1+g_2-g_3}{2(g_1+2 g_2-g_3)}.
\ee

}
\end{enumerate}

 \item[ii)] When $ (g_1-g_2)^2>g_1 g_3 $, we have:
 \be\label{Psququart2}
 \overline P_s=
 \frac{1}{2}\left(1+\frac{\sqrt{\xi_-}+\sqrt{\xi_+}}{4}\right),  
 \ee
  with the optimal input given by
 \be
 r_0=r_3=1/2-t^*, \quad \quad  r_1=r_2=t^*.
 \ee
  {where $t^*$ is defined in the following subclasses ii.1 and ii.2}
 \begin{enumerate}
 \item[ii.1)] $4 (g_1 - g_2)^2 < (g_1 + g_3)^2$
 \be
t^*=\max\{ t_+^*, t_-^*\},
 \ee

 where 
\begin{align}\label{eq:tstarpm}
 t^*_\pm \equiv \frac{\beta }{\alpha }{\pm} \sqrt{\frac{pQ+\Xi+Q^2}{48\alpha Q}}
 +\sqrt{
 \frac{2pQ-\Xi-Q^2}{48\alpha  Q}
 \mp q\sqrt{\frac{3\alpha Q}{pQ+\Xi+Q^2}}}.
\end{align}
The explicit expressions of the symbols are reported in Appendix \ref{app:symbols}.

 \item[ii.2)] $ 4 (g_1 - g_2)^2 \geq (g_1 + g_3)^2$
\be
t^*=t^*_+.
 \ee

\end{enumerate}

 \end{itemize}
 
 This shows that the parameters space $\{\gamma_0, \gamma_1\}$, besides being symmetric w.r.t.
$\gamma_0 = \gamma_1$, it is further divided into four parts by lines $g_1+g_2=g_3$, $(g_1-g_2)^2=g_2 g_3$ and $4(g_1-g_2)^2=(g_1+g_3)^2$ (see Fig.\ref{rplot2}).

\begin{figure}[H]
\centering
\includegraphics[width=9cm]{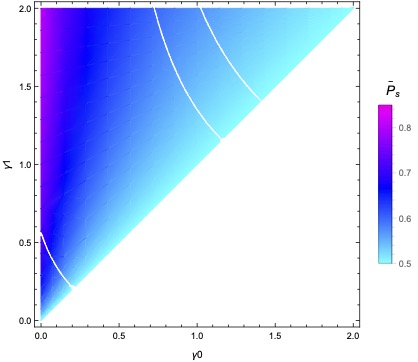}
\caption{{Density plot of $\overline{P}_s$ in the parameters space $\{\gamma_0,\gamma_1\}$ with highlighted the four regions (from bottom left corner to top right corner): i.1) $g_1+g_2\leq g_3 \wedge (g_1-g_2)^2\leq g_1 g_3$; i.2) $g_1+g_2>g_3 \wedge (g_1-g_2)^2 \leq g_1 g_3$;} ii.1) $(g_1-g_2)^2> g_1 g_3\wedge 4(g_1-g_2)^2\leq (g_1+g_3)^2$ 
and ii.2) $(g_1-g_2)^2> g_1 g_3\wedge 4(g_1-g_2)^2>(g_1+g_3)^2$.}
\label{rplot2}
\end{figure}
 
{The piecewise function $\overline{P_s} $ is continuous w.r.t. to the variables $\gamma_0$ and $\gamma_1$, however its first derivatives are discontinuous in the lines $g_1+g_2=g_3$, $(g_1-g_2)^2=g_2 g_3$ and $4(g_1-g_2)^2=(g_1+g_3)^2$ }


\section{Discriminating with energy constraint}

In this section, we will solve the optimization problem \eqref{Psf} subject to the constraint \eqref{Econstr} for $d=2,3,4$.
Taking into account the Observation \ref{ob3}, it is enough to consider $0<E\leq \frac{d-1}{2}$.

\subsection{The qubit case}
Imposing the energy constraint \eqref{Econstr} to the state \eqref{eq:Psi01},
we have 
\begin{equation}
E=r_1. 
\end{equation}

Expressing $r_0=1-E$, thanks to \eqref{normqubit}, we will then arrive to 
\begin{equation}
 \label{eq:PsPsi01}
 \overline{P}_{s}=\frac{1}{2}\left(1+g_1 \sqrt{(1-E)E}\right),
\end{equation}
which coincide with \eqref{Psfqubit} for $E=1/2$.


\subsection{The qutrit case}\label{sec:d3E}
If we now impose the energy constraint \eqref{Econstr} to the state \eqref{inputqutrit},
the problem becomes of finding
\begin{align}\label{t2constr}
{\overline{P}_s=}\max_\varphi  P_s(\varphi)  
\quad \text{s. t.}\quad &\begin{array}{cc} 
r_1+2r_2&=E\\
r_0+r_1+r_2&=1
\end{array}.
\end{align}
The {numerical} results, namely $\overline{P}_s$ for various energy constraints, are shown in Fig.\ref{figP3}.  We may see that when the energy is smaller than 0.5, which is the optimal energy for qubit, there always exists a region where qubit performs better than qutrit. Such a region shrinks as energy increases, and nullifies when $E= 0.5$. In other words, qutrit always performs better than qubit when $E\geq 0.5$. 
This is obtained, for a fixed value of $E$, by comparing the values of $\overline{P}_s$ coming from \eqref{t2constr} with those from  \eqref{eq:PsPsi01} on each point of the parameters space.

\begin{figure}[H]
    \centering
    \begin{subfigure}[t]{0.5\textwidth}
        \centering
        \includegraphics[width=\textwidth]{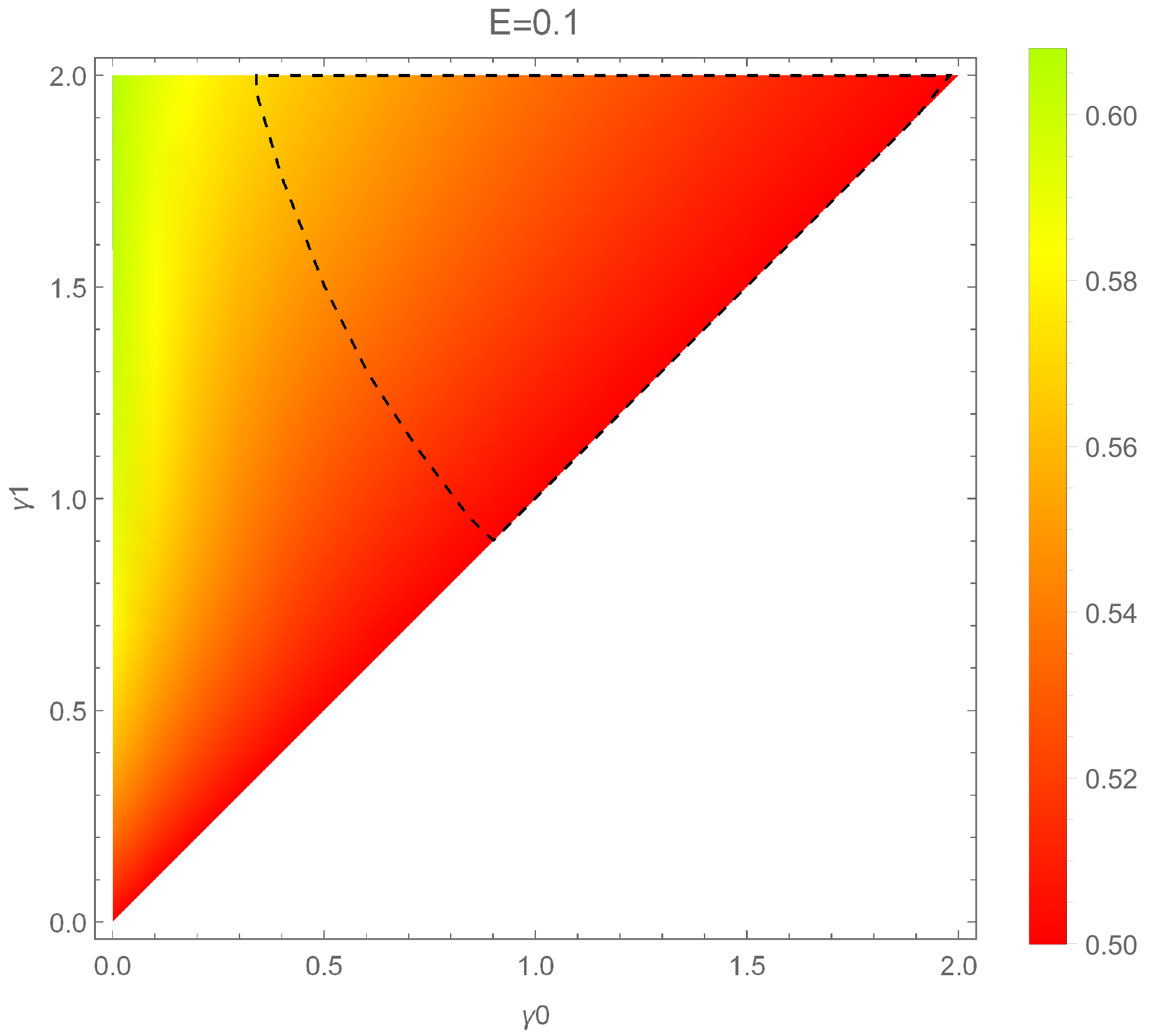}
    \end{subfigure}%
   \hfill
    \begin{subfigure}[t]{0.5\textwidth}
        \centering
        \includegraphics[width=\textwidth]{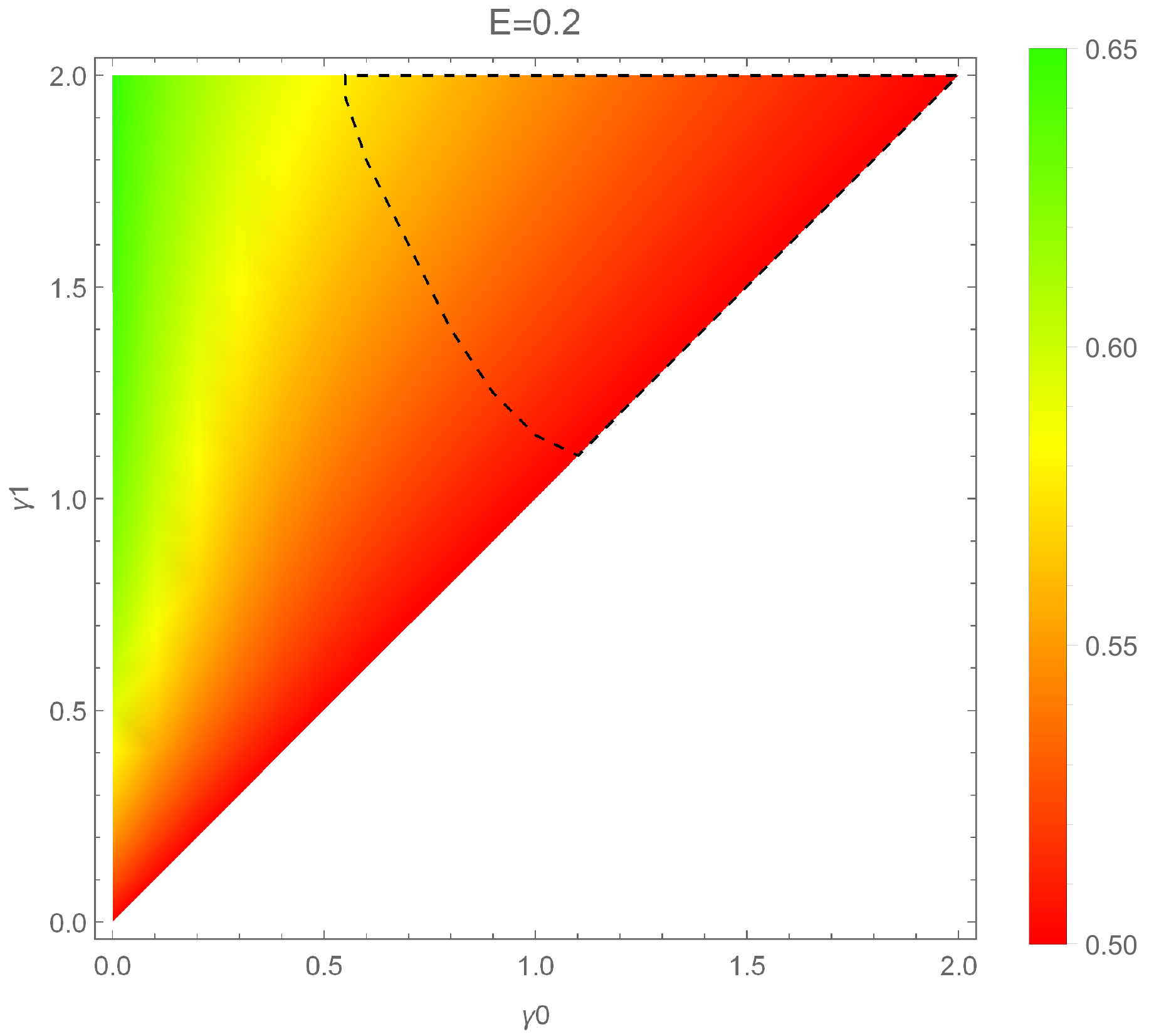}
    \end{subfigure}
    \newline 
    \begin{subfigure}[t]{0.5\textwidth}
        \centering
        \includegraphics[width=\textwidth]{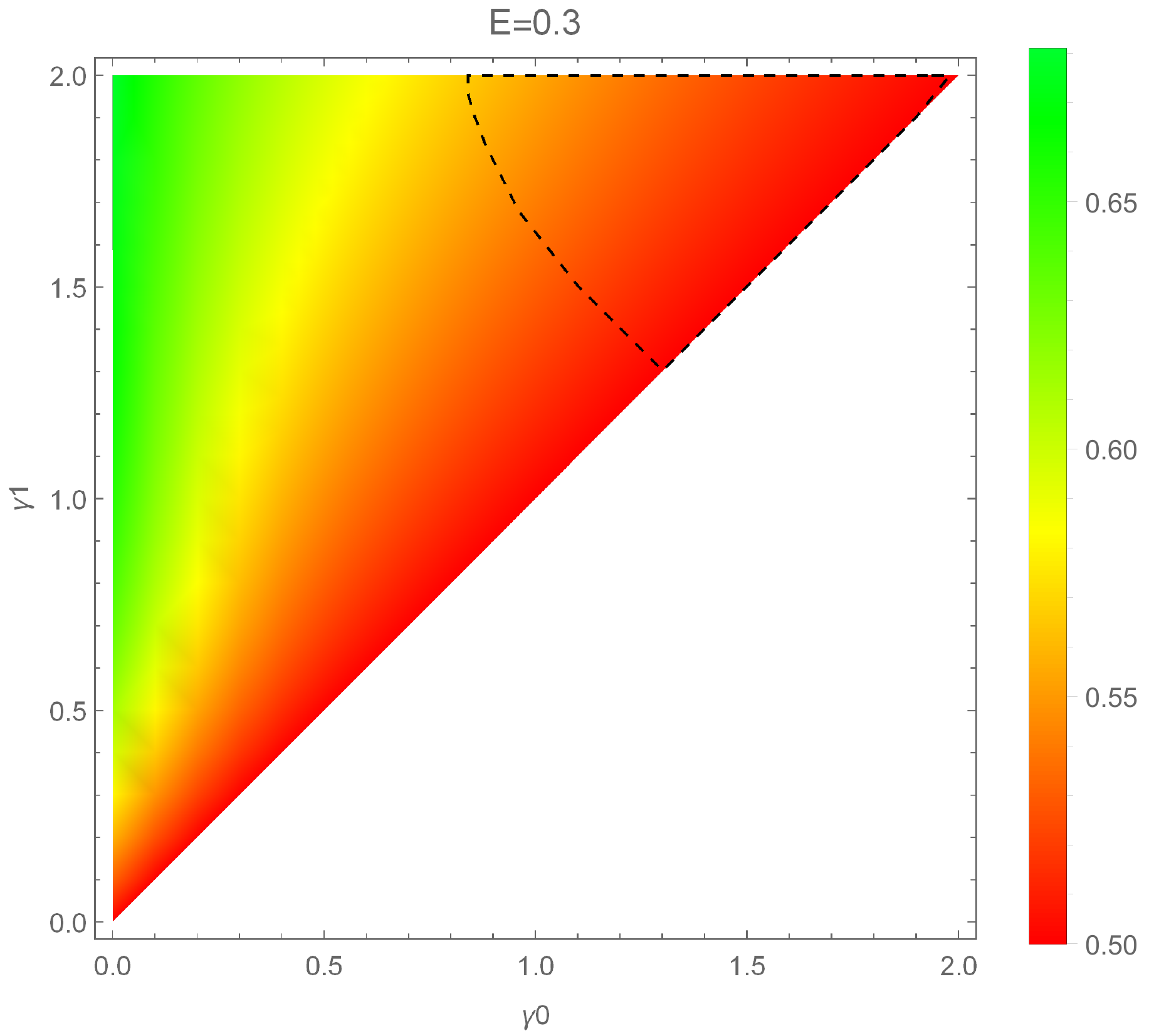}
    \end{subfigure}%
   \hfill
    \begin{subfigure}[t]{0.5\textwidth}
        \centering
        \includegraphics[width=\textwidth]{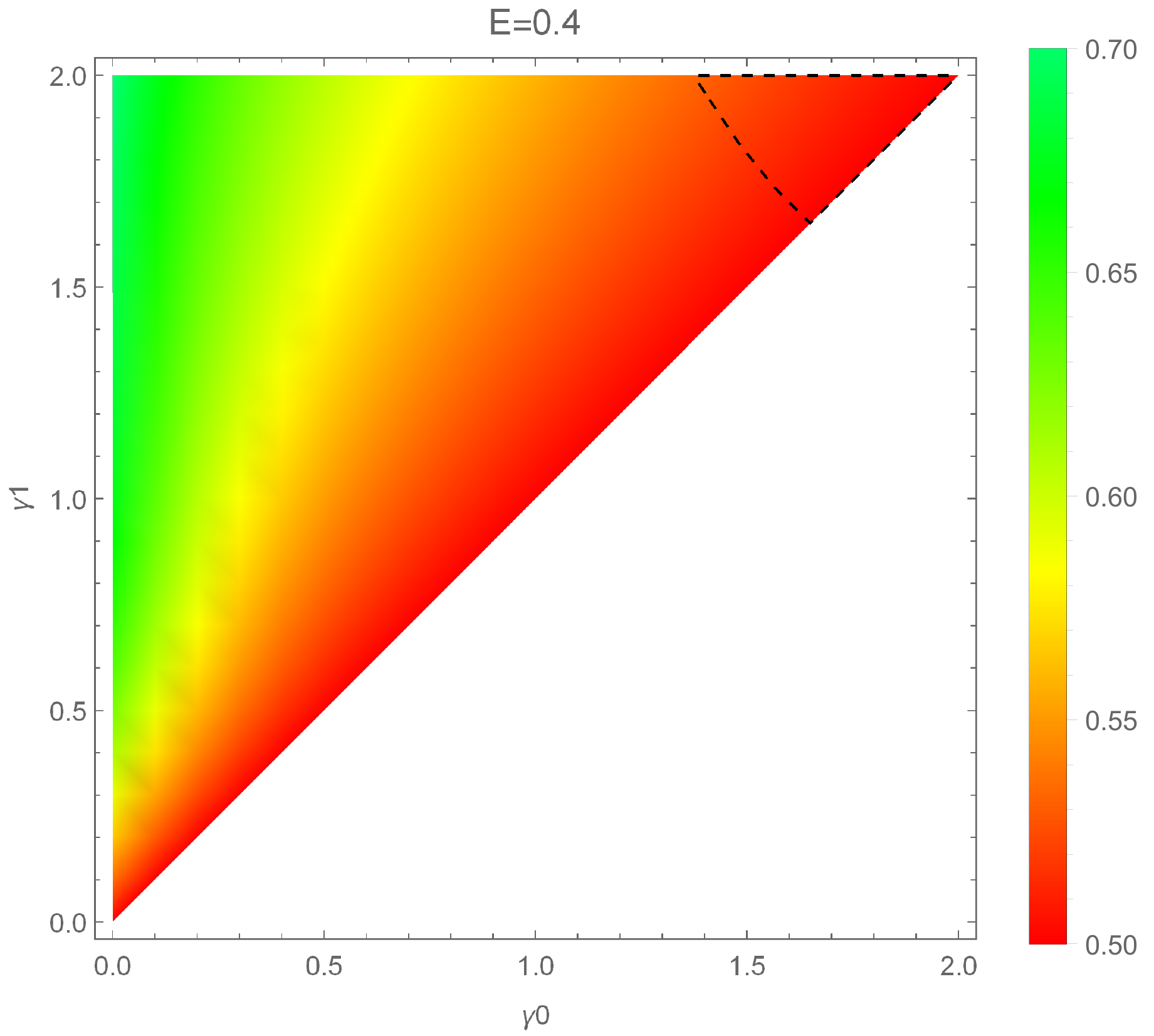}
    \end{subfigure}
    \caption{Density plots of $\overline{P}_s$ in the parameter region $\{\gamma_0,\gamma_1\}$  corresponding to different energy constraints. The black dashed line marks the {border} of the region where the optimal input state becomes a qubit state (top right corner).}  
     \label{figP3}
\end{figure}


\subsection{The ququart case}

After imposing the energy constraint \eqref{Econstr} to the state  \eqref{inputqudit}, the problem becomes of finding
\begin{align}\label{d4constr}
\overline{P}_s=&\max_\varphi  P_s(\varphi)  
\quad \text{s. t.}\quad &\begin{array}{cc} 
r_1+2r_2+3r_3&=E\\
r_0+r_1+r_2+r_3&=1
\end{array},
\end{align}
The {numerical} results, namely $\overline{P}_s$ for various energy constraints, are shown in Fig.\ref{figP4}.  We may see that when $E<1$, there always exists a region where qutrit is the optimal. This region shrinks as $E$ increases, vanishes when $E=1$ which is the optimal energy level for $d=3$. 
This is obtained, for a fixed value of $E$, by comparing the values of $\overline{P}_s$ coming from
\eqref{d4constr} with those from \eqref{t2constr} on each point of the parameters space.

\begin{figure}[H]
    \centering
    \begin{subfigure}[t]{0.5\textwidth}
        \centering
        \includegraphics[width=\textwidth]{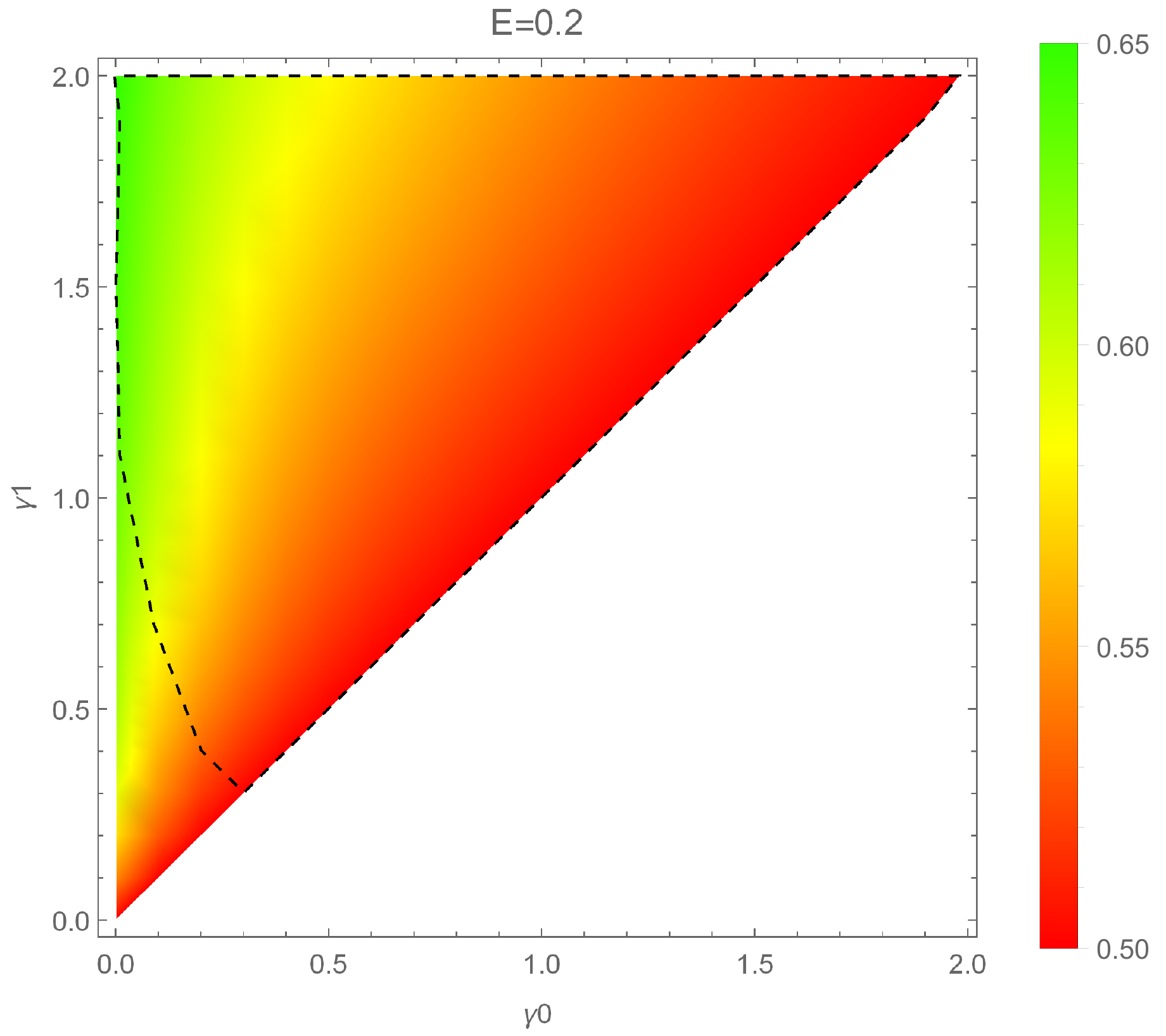}
    \end{subfigure}%
   \hfill
    \begin{subfigure}[t]{0.5\textwidth}
        \centering
        \includegraphics[width=\textwidth]{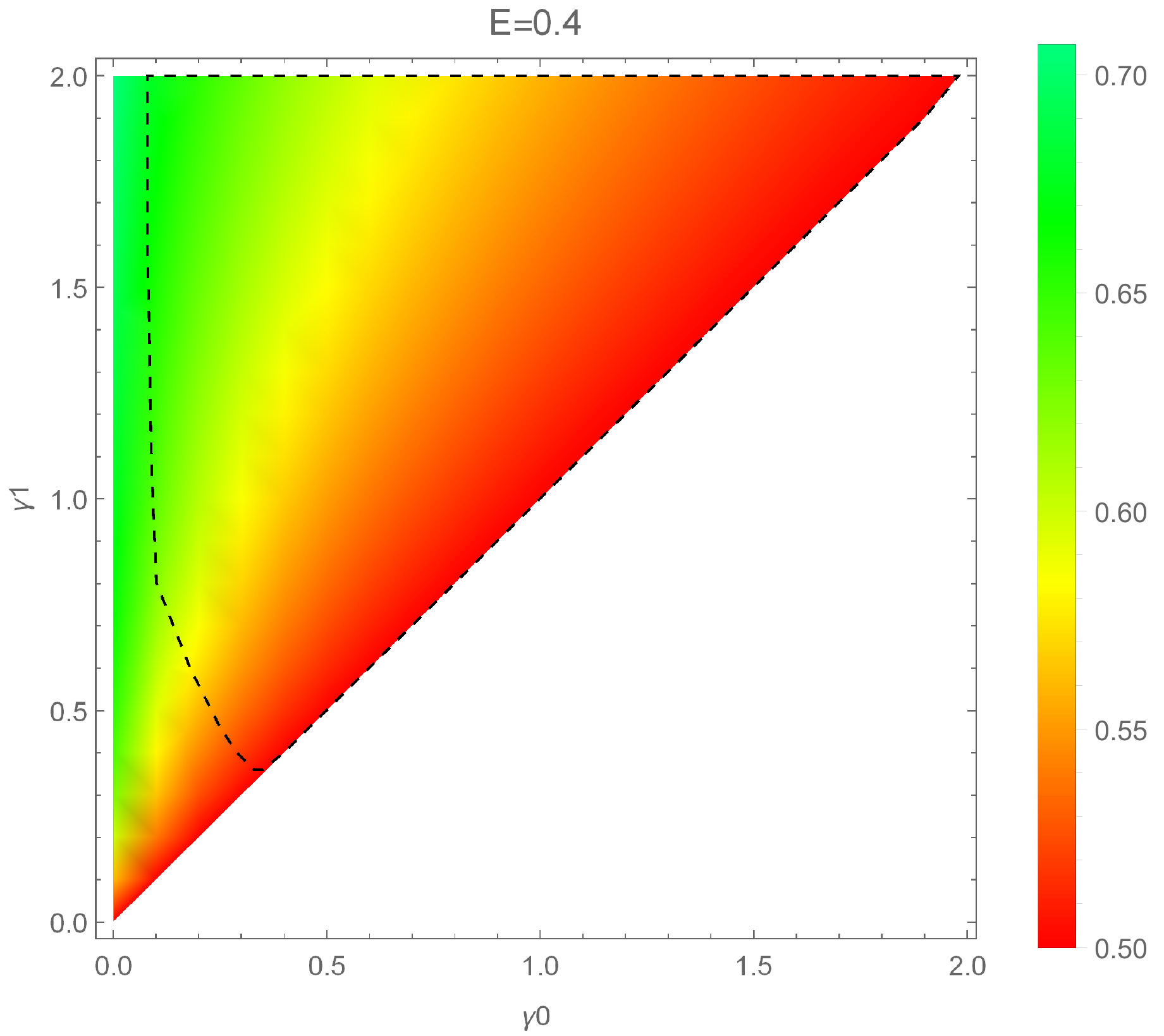}
    \end{subfigure}
    \newline 
    \begin{subfigure}[t]{0.5\textwidth}
        \centering
        \includegraphics[width=\textwidth]{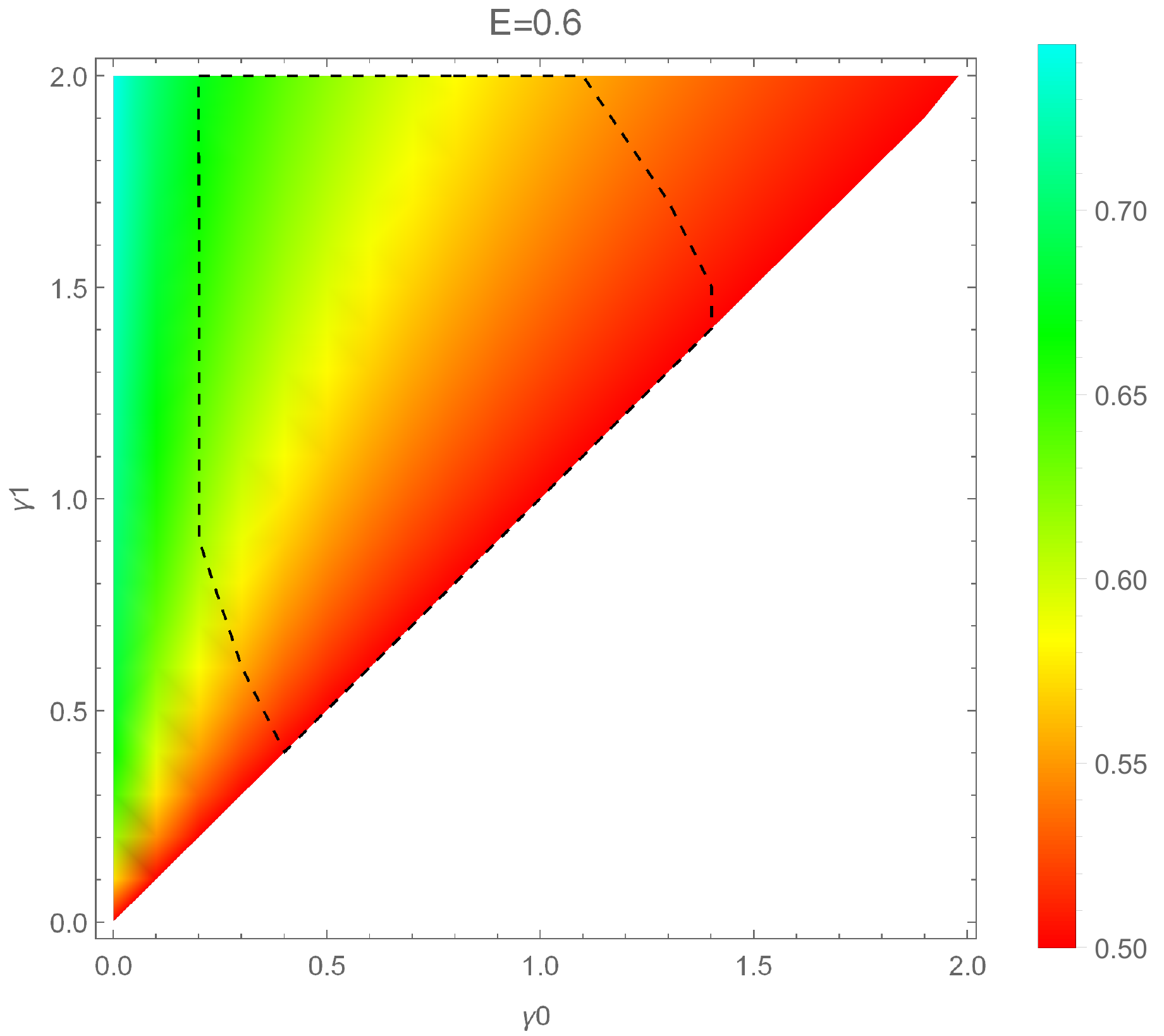}
    \end{subfigure}%
   \hfill
    \begin{subfigure}[t]{0.5\textwidth}
        \centering
        \includegraphics[width=\textwidth]{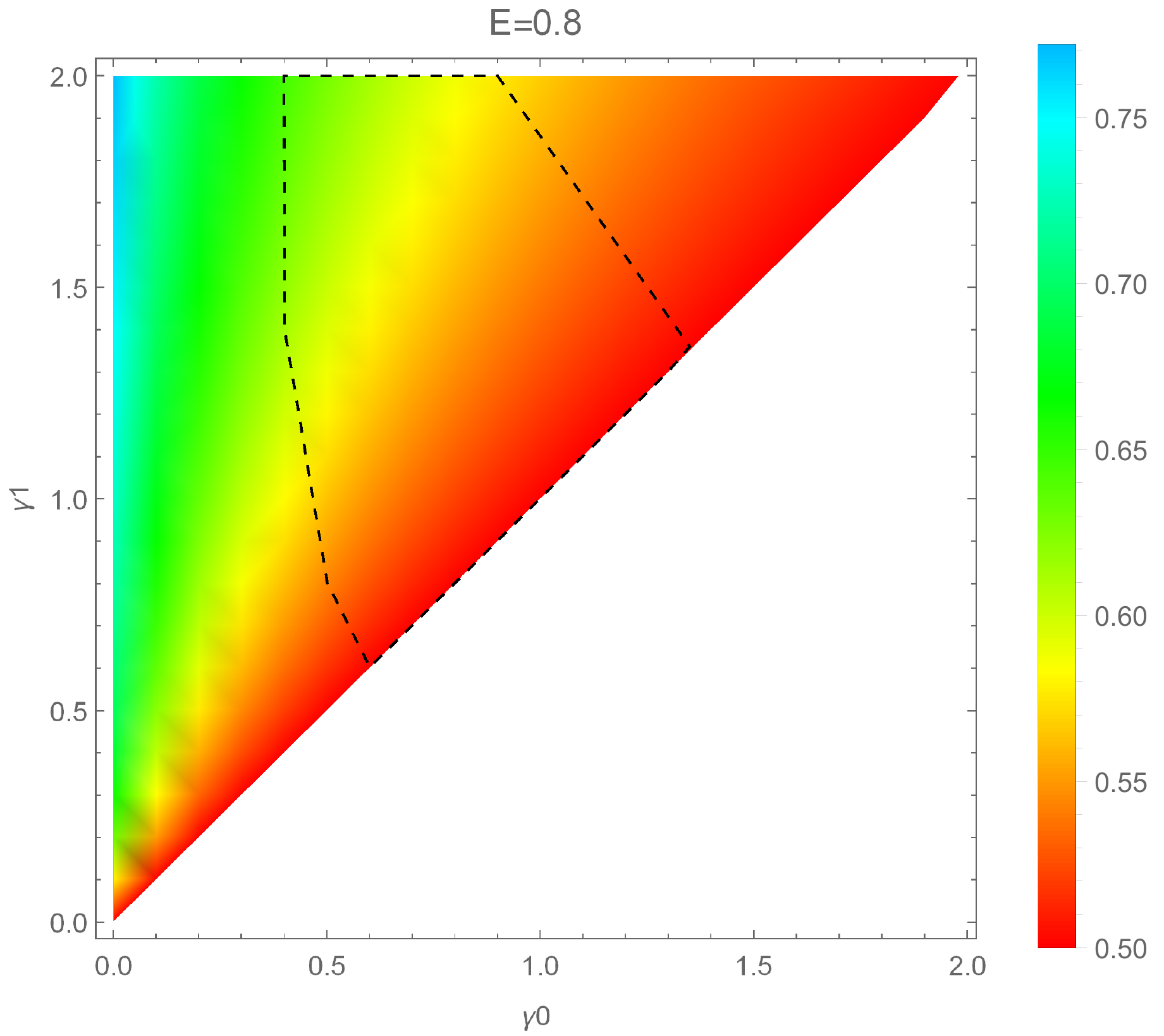}
    \end{subfigure}
    \caption{Density plots of $\overline{P}_s$ in the parameter region $\{\gamma_0,\gamma_1\}$  corresponding to different energy constraints. The black dashed line marks the {border} of the region where the optimal input state becomes a qutrit state.
}
    \label{figP4}
\end{figure}


\section{Discrimination with side entanglement}

In this Section we show that the celebrated resource of entanglement is not helping the discrimination of phase damping channels. {This comes into play when considering the tensor product of identity map with phase damping channel \footnote{{The output entropy of such kind of composition was estimated in Refs.\cite{Amosov2}}}.}

The standard process of channel discrimination with side entanglement involves the use of
a maximally entangled state (MES) $\psi=|\psi\rangle\langle\psi|$ in ${\cal H}\otimes {\cal H}$ whose half 
is sent through the channel so that at the output the measurement can be done on the global state 
\be
({\rm id}\otimes {\cal N}_{\gamma})(\psi).
\ee
Then the problem is how to contemplate all maximally entangled states as probes.

Given an orthonormal basis $\{|e_i\rangle\}_{i=0}^{d-1}$ for $\cal H$, e.g. the canonical one, a MES can be written as 
\begin{equation}
|\psi\rangle=\frac{1}{\sqrt{d}}\sum_{i=0}^{d-1}|e_i\rangle|e_i\rangle.
\end{equation}
{
Although there are infinitely many bases that we can use to construct MES,
 it could be enough to consider those bases that mutually differ as much as possible, namely the Mutually Unbiased Bases (MUB) \cite{Berge,Eusebi}.

However, we will follow hereafter another route that will also encompass non maximally entangled states.

A generic two-qudit state can be written, in the Schmidt form, as 
\be\label{qd}
\sum_{i=0}^{d-1}\sqrt{r_i} |e_i\rangle |f_i\rangle,
\ee
where $\{\ket{e_i}\}_{i=0}^{d-1}$, $\{\ket{f_i}\}_{i=0}^{d-1}$ are two orthonormal bases and $r_i\geq 0$ 
with $\sum_{i=0}^{d-1}r_i=1$.
Since we are interested on the benefit that entanglement can provide, 
we can equivalently consider 
\be\label{qdd}
|\psi\rangle=\sum_{i=0}^{d-1}\sqrt{r_i} |i\rangle |i\rangle,
\ee
obtained from \eqref{qd} by applying local unitaries. Then, taking $\psi=|\psi\rangle\langle\psi|$ and sending half of this state through the channel,  
it turns out that 
\be
({\rm id}\otimes {\cal N}_{\gamma})(\psi)= 
\sum_{i} r_i |ii\rangle\langle ii|
+\sum_{i\neq j} \sqrt{r_ir_j} e^{-\frac{\gamma}{2}(i-j)^2} |ii\rangle\langle jj|.
\ee
As a consequence, we can express
$({\rm id}\otimes {\cal N}_{\gamma_0})(\psi)
-({\rm id}\otimes {\cal N}_{\gamma_1})(\psi)$ in a $d\times d$ matrix restricting to
the subspace spanned by $\{\ket{ii}\}_{i}$. 
This would be the same matrix as for the case without side entanglement.
Hence, taking into account that the arguments leading to \eqref{eq:coefsym} still hold, 
the quantity
\be
\left\|
({\rm id}\otimes {\cal N}_{\gamma_0})(\psi)
-({\rm id}\otimes {\cal N}_{\gamma_1})(\psi)
\right\|_1 
\ee
gives, upon optimization over the $r_i$s, a result identical to that without entanglement.
Thus, the inutility of side entanglement is affirmed in any dimension $d$.
}


\section{Concluding remarks}

In conclusion, we studied the optimization problem for the success probability in discriminating two dephasing channels, with and without input energy constraint.
Analytical solutions for the lower dimensions ($d=2,3,4$) without input energy constraint are found. It is also shown the inutility of side entanglement strategy.

In all cases analyzed,  
the optimal constrained success probability coincides with the optimal unconstrained one, 
not for the maximum value of the energy ($E=1,2,3$ in the qubit, qutrit and ququart case respectively), but for half of it. However, this still implies a linear increment of the energy with the dimension of the system.
Thus, {we expect that} in infinite dimensional systems the optimal states would be unphysical.
On the other hand, with an energy constraint, the optimal states turn out 
to live in {a space whose dimension clearly increases with the value of $E$
as summarized in Fig.\ref{figch} (taking into account the results of Figs.\ref{figP3} and \ref{figP4}).} 
The highest dimension {of Hilbert space} is always necessary on the bottom-left corner, where $\gamma_1\gamma_0\rightarrow0$.
\begin{figure}[H]
    \centering
    \begin{subfigure}[t]{0.5\textwidth}
        \centering
        \includegraphics[width=\textwidth]{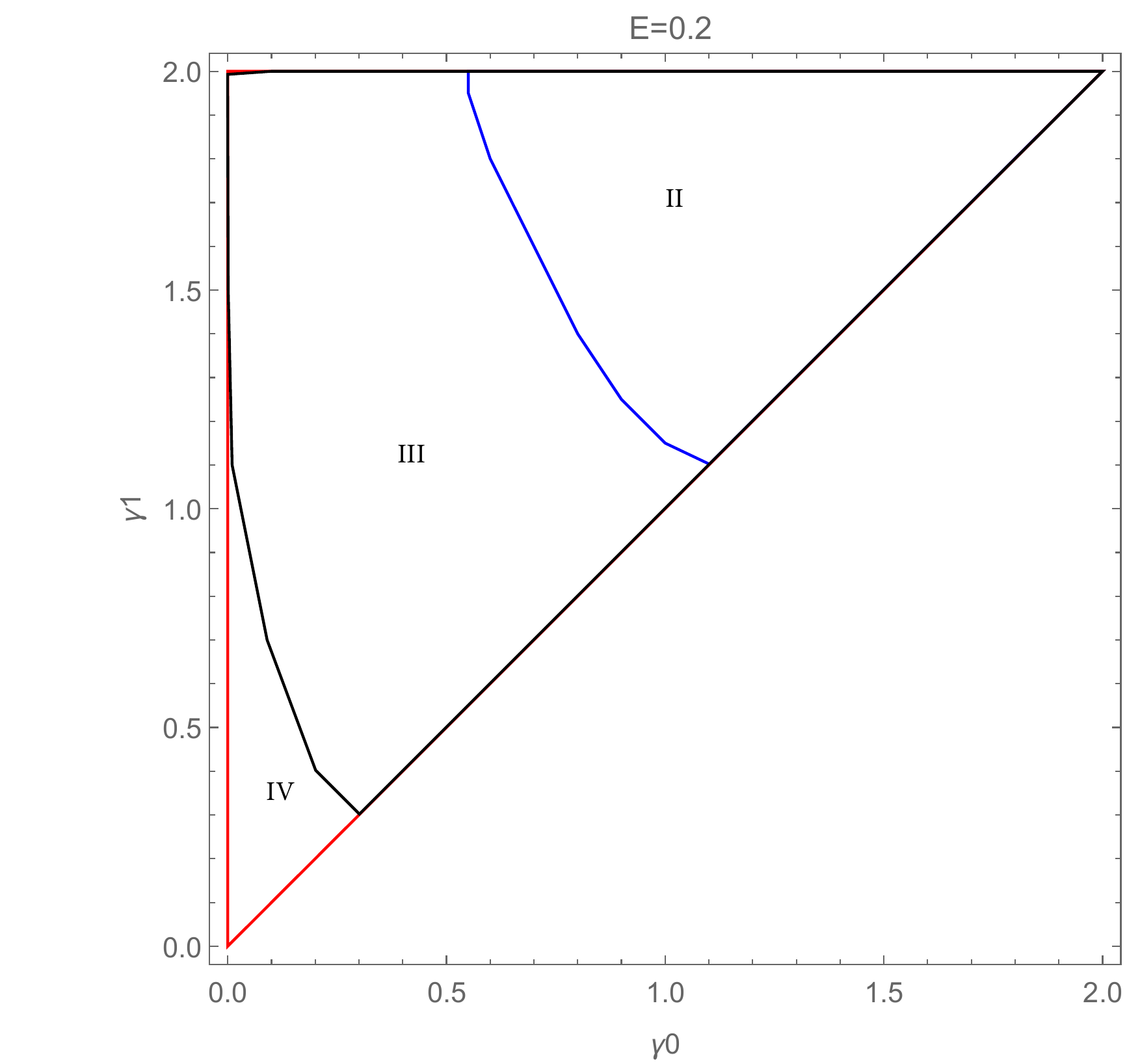}
    \end{subfigure}%
   \hfill
    \begin{subfigure}[t]{0.5\textwidth}
        \centering
        \includegraphics[width=\textwidth]{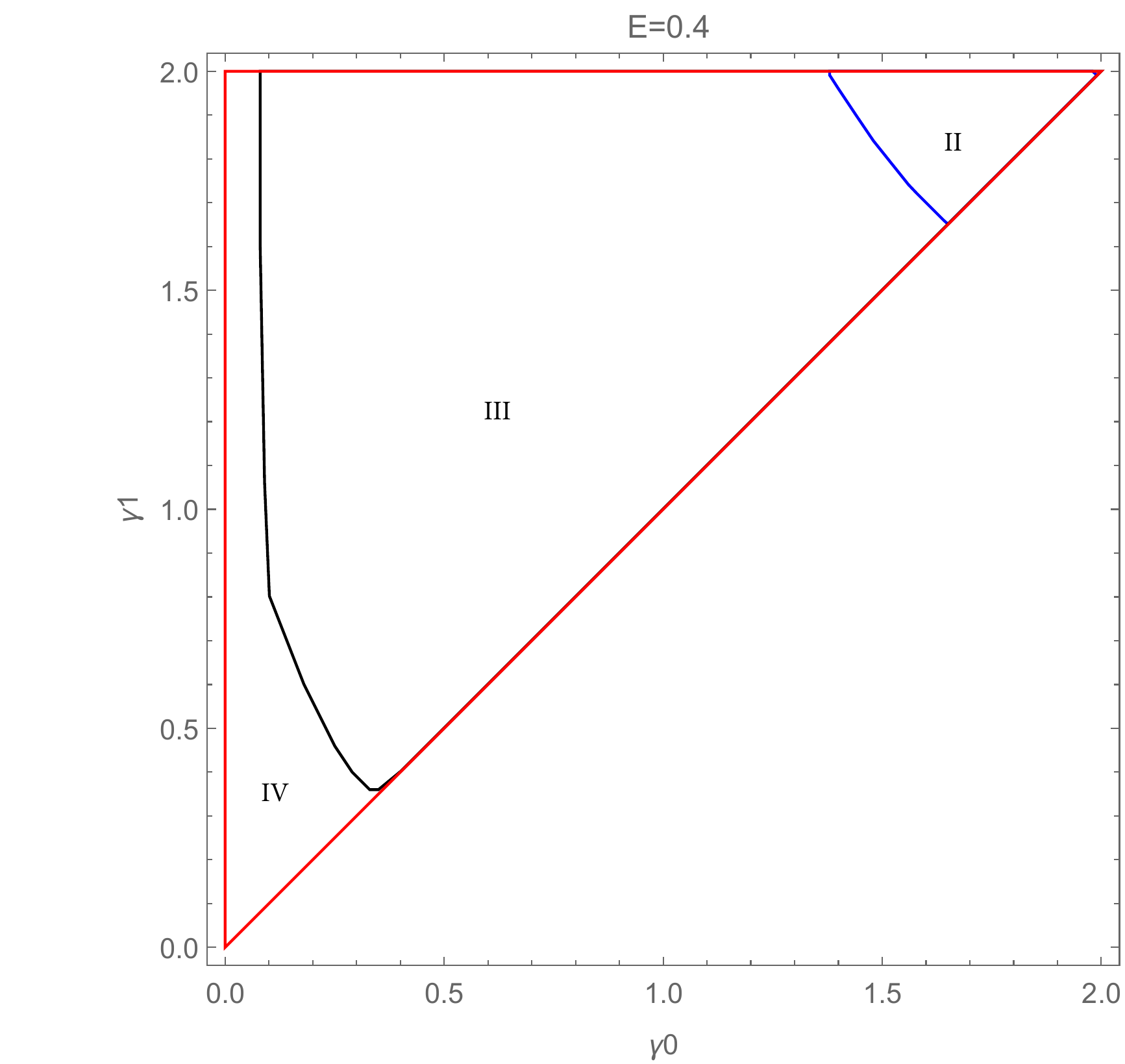}
    \end{subfigure}
 
    \caption{{Parameters region $\{\gamma_0,\gamma_1\}$ divided into parts according to the optimal dimension of the input state for a fixed energy constraint.}
    Labels II, III, IV refer to dimensions $d=2,3,4$ respectively.}
    \label{figch}
\end{figure}

About the discrimination with side entanglement, it is worth observing the following:
i) Side entanglement is very useful for Pauli channels \cite{Max};
ii) Side entanglement is partially useful for amplitude damping \cite{MilaJPA};
iii) Side entanglement is not useful for phase damping channel (present work).
Then, we are led to {conjecture} that side entanglement is useful in channel discrimination if 
the Kraus operators of the channel admit a polar decomposition with non-trivial unitaries {(i.e., not identities)}.
Actually, if the Kraus operators are unitaries, then there is maximal benefit from side entanglement.
In contrast, if all Kraus operators have simply the identity as unitary part in the polar decomposition, there is no benefit from side entanglement.
This aspect will be deepen in the future work.
\\

\bigskip

The authors equally contributed to this work.
We acknowledge the funding from the European Union's
Horizon 2020 research and innovation programme under Grant agreement
No 862644 (FET-Open project: QUARTET).
\\


 



\appendix

\section{Symbols of Eq.\eqref{eq:tstarpm}}\label{app:symbols}

The symbols are defined as
 \begin{align}
\alpha& \equiv 2 \left[(g_1-g_3)^2 -4 g_2^2\right] \left[(3 g_1+g_3)^2-4 g_2^2\right], \\
\beta &\equiv g_1^2(6 g_1^2-7 g_1 g_3-18 g_2^2-3 g_3^2)
-g_1(8 g_2^2 g_3-3 g_3^3)+8 g_2^4-6 g_2^2 g_3^2+g_3^4,\\
p&\equiv 16\frac{3 \beta ^2-2\alpha\chi}{\alpha}, \\
q&\equiv 4\frac{\alpha ^2 \mu + 2\alpha  \beta  \chi +2 \beta ^3}{\alpha ^3}, \\
Q&\equiv \Bigg[ \sqrt{
\left(27 \alpha  \mu ^2+27 \beta ^2 \nu -18 \alpha  \chi  \nu 
+36 \beta  \chi  \mu +8 \chi^3\right)^2
- \Xi^3} \notag\\
&\hspace{1cm}+\left(27 \alpha  \mu ^2+27 \beta ^2 \nu -18 \alpha  \chi  \nu 
+36 \beta  \chi  \mu +8 \chi^3\right)
\Bigg]^{1/3},
\end{align}
with
\begin{align}
\Xi&\equiv 3 \alpha  \nu +12 \beta  \mu +4\chi^2, \\
\chi&\equiv g_1^2\left(18 g_1^2-19 g_1 g_3 -6 (7 g_2^2+3 g_3^2)\right)+g_1(-20 g_2^2 g_3+13 g_3^3)\notag\\
&\quad+20 g_2^4-26 g_2^2 g_3^2+ 6 g_3^4, \\
\mu&\equiv g_3^2 \left(-2 g_3^2-3g_1 g_3+5 g_1^2+6 g_2^2\right)+2 g_1 g_3 \left(g_1^2+g_2^2\right)-2 \left(g_1^2-g_2^2\right)^2,\\
\nu&\equiv g_3^2 \left(g_3^2+g_1 g_3-2 \left(g_1^2+g_2^2\right)\right).
 \end{align}


\end{document}